\documentclass[fleqn,11pt,twoside]{article}

\usepackage{amsthm,amsthm,amssymb, color, xcolor,epsfig, graphics, subfigure}

\usepackage{amsmath, graphicx, latexsym, lscape}

\makeatletter
\newcommand{\copyrightnote}[2]{{\renewcommand{\thefootnote}{}
 \footnotetext{\small\it
\begin{flushleft}
 \copyright \ #1   #2  
\end{flushleft}}}}

\newcommand{\Name}[1]{\begin{flushleft}
                       \LARGE \bf #1
                       \end{flushleft}\vspace{-3mm}}

\newcommand{\Author}[1]{\begin{flushleft}
                       \it #1 \end{flushleft}}

\newcommand{\Address}[1]{\begin{flushleft}
                       \it #1 \end{flushleft}}

\newcommand{\Date}[1]{\begin{flushleft}
                      \small  \it #1 \end{flushleft}}

\newcommand{\evenhead}{Author \ name}
\newcommand{\oddhead}{Article \ name}

\renewcommand{\@evenhead}{
\hspace*{-3pt}\raisebox{-15pt}[\headheight][0pt]{\vbox{\hbox to \textwidth
{\thepage \hfil \evenhead}\vskip4pt \hrule}}}
\renewcommand{\@oddhead}{
\hspace*{-3pt}\raisebox{-15pt}[\headheight][0pt]{\vbox{\hbox to \textwidth
{\oddhead \hfil \thepage}\vskip4pt\hrule}}}
\renewcommand{\@evenfoot}{}
\renewcommand{\@oddfoot}{}

\long\def\@makecaption#1#2{%
  \vskip\abovecaptionskip
  \sbox\@tempboxa{\small \textbf{#1.}\ \ #2}%
  \ifdim \wd\@tempboxa >\hsize
    {\small \textbf{#1.}\ \ #2}\par
  \else
    \global \@minipagefalse
    \hb@xt@\hsize{\hfil\box\@tempboxa\hfil}%
  \fi
  \vskip\belowcaptionskip}

\newcommand{\JNMPnumberwithin}[3][\arabic]{%
  \@ifundefined{c@#2}{\@nocounterr{#2}}{%
    \@ifundefined{c@#3}{\@nocnterr{#3}}{%
      \@addtoreset{#2}{#3}%
      \@xp\xdef\csname the#2\endcsname{%
        \@xp\@nx\csname the#3\endcsname .\@nx#1{#2}}}}%
}

\renewenvironment{proof}[1][\proofname]{\par
  \normalfont
  \topsep6\p@\@plus6\p@ \trivlist
  \item[\hskip\labelsep\textbf{%
    #1\@addpunct{.}}]\ignorespaces
}{%
  \qed\endtrivlist
}

\newcommand{\resetfootnoterule} {
  \renewcommand\footnoterule{%
  \kern-3\p@
  \hrule\@width.4\columnwidth
  \kern2.6\p@}
}

\renewcommand{\footnoterule}{}

\makeatother

\theoremstyle{definition}

\newtheorem{theorem}{Theorem}
\newtheorem{lemma}{Lemma}

\newtheorem{proposition}{Proposition}

\theoremstyle{remark}

\begin{document}

\renewcommand{\evenhead}{ {\LARGE\textcolor{blue!10!black!40!green}{{\sf \ \ \ ]ocnmp[}}}\strut\hfill 
Yan Lingjuan, Hu Xingbiao, Liu Yajie and Zhang Yingnan
}
\renewcommand{\oddhead}{ {\LARGE\textcolor{blue!10!black!40!green}{{\sf ]ocnmp[}}}\ \ \ \ \  
On the Satsuma--Mimura Equation
}

\thispagestyle{empty}
\newcommand{\FistPageHead}[3]{
\begin{flushleft}
\raisebox{8mm}[0pt][0pt]
{\footnotesize \sf
\parbox{150mm}{{\textcolor{blue!10!black!40!green}{{\bf Open Communications in Nonlinear Mathematical Physics}}}
\ \ {Special Issue: Hietarinta}, 2026\\[0.1cm]
\strut\hfill 
ocnmp:18568
pp #2\hfill {\sc #3}}}\vspace{-13mm}
\end{flushleft}}

\FistPageHead{1}{\pageref{firstpage}--\pageref{lastpage}}{ \ \ }

\strut\hfill

\strut\hfill

\copyrightnote{The authors. Distributed under a Creative Commons Attribution 4.0 International License}

\begin{center}

{\bf {\large A Special OCNMP Issue in Honour of Jarmo Hietarinta}}\\[0.2cm]
{\bf {\large on the Occasion of his 80th Birthday}}
\end{center}

\smallskip

\Name{Pole Dynamics, Linearization, and Perturbations of the Satsuma--Mimura Equation}

\Author{Yan Lingjuan\textsuperscript{1,2}, Hu Xingbiao\textsuperscript{2,3}, Liu Yajie\textsuperscript{3},
Zhang Yingnan\textsuperscript{4,*}}
\Address{%
\textsuperscript{1} Department of Mathematics and Physics, North China Electric Power University, Baoding, Hebei, China\\
\textsuperscript{2} LSEC, ICMSEC, Academy of Mathematics and Systems Science, Chinese Academy of Sciences, Beijing, China\\
\textsuperscript{3} School of Mathematical Sciences, University of Chinese Academy of Sciences, Beijing, China\\
\textsuperscript{4} Key Laboratory of NSLSCS, Ministry of Education, School of Mathematical Sciences, Nanjing Normal University, Nanjing, Jiangsu, China\\[2mm]
* Corresponding author: \normalfont\texttt{ynzhang@njnu.edu.cn}}

\Date{Received June 21, 2026; Accepted July 7, 2026}

\setcounter{equation}{0}

\smallskip

\noindent
{\bf Citation format for this Article:}\newline
Yan Lingjuan, Hu Xingbiao, Liu Yajie and Zhang Yingnan,
Pole Dynamics, Linearization, and Perturbations of the Satsuma--Mimura Equation,
{\it Open Commun. Nonlinear Math. Phys.}, Special Issue:\,Hietarinta, ocnmp:18568, \pageref{firstpage}--\pageref{lastpage}, 2026.

\strut\hfill

\noindent
{\bf The permanent Digital Object Identifier (DOI) for this Article:}\newline
{\it 10.46298/ocnmp.18568}
\strut\hfill

\begin{abstract}

\noindent 
This paper investigates the pole dynamics and perturbation theory of algebraic soliton solutions associated with the Satsuma--Mimura (SM) equation.
First, we give a qualitative analysis of the pole system associated with algebraic soliton solutions, thereby completing a point left open in \cite{Yan_2025}. We then examine three perturbations of the SM equation. The $u_x$ perturbation preserves exact linearizability and leads to an explicit shifted algebraic soliton solution. The $u_{xx}$ perturbation can be reduced to the unperturbed SM equation by a scaling transformation, which yields the corresponding pole asymptotics and soliton profile. For the genuinely nontrivial $Hu_{xx}$ perturbation, we derive the first-order perturbation equations and present a short-time numerical simulation based on an explicit Euler discretization. These results clarify how algebraic solitons of the SM equation respond to different perturbative mechanisms.

\end{abstract}

\label{firstpage}

\section{Introduction}
Integrable systems involving the Hilbert transform have attracted sustained attention. Representative examples include the Benjamin--Ono equation \cite{Benjamin_1966, Davis_Acrivos_1967, Ono_1975}, Matsuno's model equation for deep-water waves and its Pfaffian solutions \cite{Matsuno_1988, Matsuno_1989, Matsuno_1990_Backlund}, the focusing and defocusing nonlocal Schr\"odinger equations \cite{pelinovsky1995intermediate, pelinovsky1995spectral, Matsuno_2000_multiperiodic, Matsuno_2023_multiphase}, 
the quantum BO equation \cite{abanov2005quantum}, the bidirectional BO equation \cite{Abanov_2009}, the Calogero--Moser derivative Schr\"odinger equation \cite{Gerard_Lenzmann_2024}, the half-wave equation \cite{Gerard_Lenzmann_2025}, and the cubic Szeg\"o equation \cite{Gerard_2024_Szego, Matsuno_2024_cubic}. Among them is a class of C-integrable systems in which the Hilbert transform is compatible with a linearizing transformation. In \cite{Yan_2025}, we investigated solution properties of the Satsuma--Mimura (SM) equation \cite{satsuma1981exact,satsuma1985} and the Mikhailov--Novikov (MN) equation \cite{Mikhailov2003classification}. The SM equation reads
\begin{equation}
    u_t = \delta (u_{xx} - (uHu)_x), \label{eq_MNSM_SME}
\end{equation}
while the MN equation is
\begin{equation}
    H u_t = \delta (u_{xx} - (uHu)_x). \label{eq_MNSM_MNE}
\end{equation}
Under the transformation
\begin{equation}
    u = i\ln \left(\frac{f^*}{f}\right)_x,
\end{equation}
if $f$ is a polynomial whose zeros lie in the upper half-plane, 
the MN equation is linearized as
\begin{equation}
    if_t + \delta f_{xx} = 0,
\end{equation}
whereas the SM equation is linearized as
\begin{equation}
    f_t = \delta f_{xx}. \label{eq_MNSM_HE}
\end{equation}
Starting from these linear equations, and taking $f(x,0)$ to be a polynomial of degree $N$ with all zeros in the upper half-plane, one obtains algebraic soliton solutions of the MN and SM equations, and their pole dynamics can then be deduced.

The resulting pole systems have a distinctive asymptotic structure: as $t\to +\infty$, depending on the sign of the parameter $\delta$, the poles approach four distinct straight lines. In particular, for the SM equation with $\delta<0$, the poles stay in the upper half-plane for all $t\in[0,+\infty)$, which corresponds to an algebraic soliton solution that exists globally and whose amplitudes tend to a common height. This naturally raises the question of how these algebraic solitons respond to perturbations. In this paper, we study several perturbations of the SM equation and complete the qualitative analysis of the pole dynamics left open in \cite{Yan_2025}.

\section{Properties of the pole systems for the Satsuma--Mimura and Mikhailov--Novikov equations}
If $f$ is a polynomial of degree $N$, we write it in the factorized form
\begin{equation}
    f(x,t) = \frac{1}{N!}\prod_{j=1}^N \bigl(x-x_j(t)\bigr).
\end{equation}
At $t=0$, we assume $f(x,0)$ has the form
\begin{equation}
f(x,0)
=
\frac{c_N}{N!}x^N
+
\frac{c_{N-1}}{(N-1)!}x^{N-1}
+\cdots+c_0.
\end{equation}
Comparing the coefficients of the two representations yields
\begin{equation}
c_N=1,
\qquad
c_{N-1}
=
-\frac{1}{N}\sum_{j=1}^N x_j(0),
\qquad
\ldots,
\qquad
c_0
=
\frac{(-1)^N}{N!}
\prod_{j=1}^N x_j(0).
\end{equation}

In \cite{Yan_2025}, we derived the general polynomial solution of the linearized SM equation:
\begin{equation}
    f_N(x,t) = \sum_{j=0}^N \Bigl( c_j \sum_{m+2n = j} \frac{x^m(\delta t)^n}{m!n!} \Bigr). \label{eq_gensol_SME}
\end{equation}
We also obtained the asymptotic expression for the zeros of $f$ when $\delta = -\frac{1}{2}$ as $t \to +\infty$, namely
\begin{equation}
    x_j(t) \sim \alpha_j \sqrt{t} - c_{N-1},
\end{equation}
where $\alpha_j$ are the zeros of the $N$-th degree probabilists' Hermite polynomial
\begin{equation}
H^{\mathrm{Prob}}_N(x) = N! \sum_{j=0}^{\lfloor\frac{N}{2}\rfloor } \frac{(-1)^j}{j! (N-2j)!}\frac{x^{N-2j}}{2^j}.
\end{equation}
The corresponding general polynomial solution of the MN equation is
\begin{equation}
    f_N(x,t) = \sum_{j=0}^N \Bigl( c_j \sum_{m+2n = j} \frac{x^m(i\delta t)^n}{m!n!} \Bigr). \label{eq_gensol_MNE}
\end{equation}

The following scaling observation gives the corresponding statement for a general value of $\delta$.
\begin{proposition}
    If $x_j(t),\ (j=1,2,\cdots,N)$ are the zeros of the polynomial
    \[\sum_{j=0}^N c_j \sum_{m + 2n = j} \frac{x^m (-\frac{1}{2}t)^n}{m!n!}, \]
    then $\sqrt{-2\delta}\,x_j(t),\ (j=1,2,\cdots,N)$ are the zeros of the polynomial:
    \begin{equation}
    (-2\delta)^{\frac{N}{2}}\sum_{j=0}^N \frac{c_j}{(-2\delta)^{j/2}} \sum_{m + 2n = j} \frac{x^m (\delta t)^n}{m!n!}.
    \end{equation}
    \label{proposition_zeros_delta}
\end{proposition}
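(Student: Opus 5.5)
The plan is a direct substitution: I will show that the second polynomial, evaluated at $x=\sqrt{-2\delta}\,y$, is a nonzero constant multiple of the first polynomial evaluated at $y$. This is a scaling computation, and it assumes $\delta<0$ so that $\lambda:=\sqrt{-2\delta}>0$ is real. Once that identity holds, the zero sets correspond under $y\mapsto \lambda y$, with multiplicities preserved.

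Write
\[
P(y,t)=\sum_{j=0}^N c_j \sum_{m+2n=j}\frac{y^m(-\tfrac12 t)^n}{m!n!}
\]
and
\[
Q(x,t)=\lambda^{N}\sum_{j=0}^N \frac{c_j}{\lambda^{j}}\sum_{m+2n=j}\frac{x^m(\delta t)^n}{m!n!}.
\]
Substituting $x=\lambda y$ into a single term of the inner sum of $Q$ gives $\lambda^{m}y^m(\delta t)^n/(m!n!)$. I will then use $\delta=-\lambda^2/2$ to rewrite $(\delta t)^n=\lambda^{2n}(-\tfrac12 t)^n$. The term therefore becomes
\[
\lambda^{m+2n}\frac{y^m(-\tfrac12 t)^n}{m!n!}=\lambda^{j}\frac{y^m(-\tfrac12 t)^n}{m!n!},
\]
because $m+2n=j$. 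The factor $\lambda^{j}$ is the same for every term of the inner sum, and it cancels the prefactor $\lambda^{-j}$. It follows that $Q(\lambda y,t)=\lambda^N P(y,t)$ identically in $y$ and $t$.

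Both $P$ and $Q$ have degree $N$ in their spatial variable, with leading coefficient $c_N/N!=1/N!$ (for $Q$, since $\lambda^{N}\lambda^{-N}=1$). Consequently, $\lambda y$ is a zero of $Q(\cdot,t)$ exactly when $y$ is a zero of $P(\cdot,t)$. If $P(\cdot,t)$ has zeros $x_1(t),\dots,x_N(t)$ counted with multiplicity, then $Q(\cdot,t)$ has zeros $\lambda x_1(t),\dots,\lambda x_N(t)$, which is the claim.

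There is no real obstacle here. The only point needing care is the bookkeeping $(\delta t)^n=(-2\delta)^n(-\tfrac12 t)^n$, together with consistently choosing the branch $\lambda=\sqrt{-2\delta}$, so that $\lambda^{j}$ and $\lambda^{N}$ mean the same thing as the fractional powers $(-2\delta)^{j/2}$ and $(-2\delta)^{N/2}$ in the statement. For $\delta>0$ the same identity holds formally with any fixed square root $\lambda$ of $-2\delta$, and I would remark on this in passing.
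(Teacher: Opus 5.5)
Your proof is correct, and it is the scaling computation the paper has in mind. The paper calls the proposition a ``scaling observation'' and gives no written proof. Your identity $Q(\lambda y,t)=\lambda^N P(y,t)$ supplies that proof: it gives $Q(x,t)=\frac{1}{N!}\prod_{j}(x-\lambda x_j(t))$ directly, so multiplicities are preserved, and it reads $(-2\delta)^{j/2}$ consistently as $\lambda^j$ for a fixed choice of $\lambda$.
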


Using this proposition, we obtain the asymptotic representation of the zeros for both the MN and SM equations for a general $\delta$ as $t\to +\infty$.

\begin{theorem}
    Let 
    \begin{equation}
        \bar{x}_0 = -c_{N-1} = \frac{1}{N}\sum_{j=1}^N x_j(0).
    \end{equation}
    Then $\{x_j(t)\}\ (j=1,\cdots,N)$ have the following asymptotic behavior as $t\to+\infty$.
    
    For the SM equation \eqref{eq_MNSM_SME} ($\delta<0$), the pole system $\{x_j(t)\}$ $(j=1,2,\dots,N)$ of the $N$-algebraic soliton solution $u$ constructed from the polynomial solution $f$ of \eqref{eq_MNSM_HE} satisfies
    \begin{equation}
        x_j(t) \sim \sqrt{-2\delta}\,\alpha_j \sqrt{t} + \bar{x}_0, \quad t \to +\infty,
    \end{equation}
    where $\alpha_j$ are the zeros of $H^{\mathrm{Prob}}_N(x)$.
    
    For the SM equation \eqref{eq_MNSM_SME} ($\delta>0$), $\{x_j(t)\}$ satisfies
    \begin{equation}
        x_j(t) \sim i\sqrt{2\delta}\,\alpha_j \sqrt{t} + \bar{x}_0, \quad t \to +\infty.
    \end{equation}

    For the MN equation \eqref{eq_MNSM_MNE} ($\delta<0$), $\{x_j(t)\}$ satisfies
    \begin{equation}
        x_j(t) \sim e^{i\frac{\pi}{4}}\sqrt{-2\delta}\,\alpha_j \sqrt{t} + \bar{x}_0, \quad t \to +\infty.
    \end{equation}
    
    For the MN equation \eqref{eq_MNSM_MNE} ($\delta>0$), $\{x_j(t)\}$ satisfies
    \begin{equation}
        x_j(t) \sim e^{i\frac{3\pi}{4}}\sqrt{2\delta}\,\alpha_j \sqrt{t} + \bar{x}_0, \quad t \to +\infty.
    \end{equation}
    \label{proposition_MNSM_zeros}
\end{theorem}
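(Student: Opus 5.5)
The plan is to reduce all four cases to a single computation with an arbitrary complex parameter $\lambda$ in place of $\delta$ (or $i\delta$), using the explicit polynomial solutions \eqref{eq_gensol_SME} and \eqref{eq_gensol_MNE}. Both have the form
\[
f_N(x,t)=\sum_{j=0}^N c_j\sum_{m+2n=j}\frac{x^m(\lambda t)^n}{m!n!},
\]
with $\lambda=\delta$ for SM and $\lambda=i\delta$ for MN. I first remove the center of mass. Since $f_t=\lambda f_{xx}$ and the coefficient of $x^{N-1}/(N-1)!$ in $f_N$ is $c_{N-1}$ for all $t$, the sum of the zeros equals $-Nc_{N-1}=N\bar x_0$ for all $t$. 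Shifting $x=y+\bar x_0$ keeps the heat-type equation and gives a polynomial solution whose $c_{N-1}$ vanishes. The statement therefore reduces to showing that the zeros $y_j(t)$ of this shifted polynomial satisfy $y_j(t)=\sqrt{-2\lambda}\,\alpha_j\sqrt t+o(\sqrt t)$, with a suitable branch of the square root. In fact I expect $O(1)$ corrections.

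The key step is a self-similar rescaling. Set $y=\sqrt{-2\lambda t}\,z$. Each term $c_j\,y^m(\lambda t)^n/(m!n!)$ with $m+2n=j$ becomes $c_j\,t^{j/2}(-2\lambda)^{m/2}\lambda^n z^m/(m!n!)$. Using $(-2\lambda)^{m/2}\lambda^n=(-2\lambda)^{j/2}(-1/2)^n$, this equals $c_j(-2\lambda t)^{j/2}z^m(-1/2)^n/(m!n!)$. Dividing by $(-2\lambda t)^{N/2}$ gives
\[
\frac{f_N}{(-2\lambda t)^{N/2}}=\sum_{m+2n=N}\frac{z^m(-1/2)^n}{m!n!}+\sum_{j<N}c_j(-2\lambda t)^{(j-N)/2}\sum_{m+2n=j}\frac{z^m(-1/2)^n}{m!n!}.
\]
The first sum is exactly $H^{\mathrm{Prob}}_N(z)/N!$ by the formula for $H^{\mathrm{Prob}}_N$. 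The remaining terms are $O(t^{-1/2})$, and in fact $O(t^{-1})$ because $c_{N-1}=0$ after the shift. This is essentially the content of Proposition~\ref{proposition_zeros_delta} with $\delta$ replaced by a complex $\lambda$, and I would cite it in that form.

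Next I pass to the limit. The rescaled polynomials converge coefficientwise, uniformly on compact sets, to $H^{\mathrm{Prob}}_N(z)/N!$ as $t\to+\infty$. The $\alpha_j$ are real and simple, as zeros of an orthogonal polynomial. By Hurwitz's theorem, or by continuity of simple roots via the implicit function theorem, the $N$ zeros $z_j(t)$ satisfy $z_j(t)\to\alpha_j$, and in fact $z_j(t)=\alpha_j+O(t^{-1})$. Undoing the scaling gives $x_j(t)=\sqrt{-2\lambda}\,\alpha_j\sqrt t+\bar x_0+o(1)$.

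It then remains to evaluate $\sqrt{-2\lambda}$ in the four cases.
\begin{itemize}
\item SM with $\delta<0$: $\sqrt{-2\delta}$ is real.
\item SM with $\delta>0$: $\sqrt{-2\delta}=i\sqrt{2\delta}$.
\item MN with $\delta<0$: $-2i\delta=2|\delta|e^{i\pi/2}$, so the root is $e^{i\pi/4}\sqrt{-2\delta}$.
\item MN with $\delta>0$: $-2i\delta=2\delta e^{-i\pi/2}$, with root $e^{-i\pi/4}\sqrt{2\delta}$.
\end{itemize}
In the last case, since the zero set $\{\alpha_j\}$ is symmetric under $\alpha\mapsto-\alpha$, the branch may be replaced by its negative $e^{i3\pi/4}\sqrt{2\delta}$, which matches the statement. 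The same symmetry shows the branch choice is harmless in every case.

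The main obstacle is the matching of individual poles $x_j$ with individual $\alpha_j$. This only makes sense up to relabeling, and I would state it that way. Beyond that, the root-continuity argument needs care to be uniform in $t$. Simplicity of the Hermite zeros makes it work, so the rest is bookkeeping.
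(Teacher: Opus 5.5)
Your proposal is correct and follows essentially the same route as the paper. The paper gives no separate argument: it obtains the theorem by combining the scaling of Proposition~\ref{proposition_zeros_delta}, used with $\delta$ replaced by $i\delta$ for the MN equation, with the $\delta=-1/2$ asymptotics $x_j(t)\sim\alpha_j\sqrt{t}-c_{N-1}$ taken from the earlier paper. The only difference is that you fold the factor $\sqrt{t}$ into the same rescaling and prove that base case directly, via convergence of the rescaled polynomial to $H^{\mathrm{Prob}}_N/N!$ and simplicity of its real zeros; this makes the argument self-contained and yields the explicit error $O(t^{-1/2})$.
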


We next complete the analysis left open in \cite{Yan_2025} by studying the pole system directly from a qualitative perspective.
It is enough to analyze the pole system of the SM equation for $\delta<0$:
\begin{equation}
    \frac{dx_j}{dt} = -2\delta \sum_{k \neq j}\frac{1}{x_j - x_k}. \label{eq_MNSM_HE_poles_1st}
\end{equation}
The other cases can be reduced to this one by rotation:
\begin{itemize}
    \item For the SM equation with $\delta>0$, set $y_j = e^{-i\frac{\pi}{2}}x_j$; then $\frac{dy_j}{dt} = 2\delta \sum_{k \neq j}\frac{1}{y_j - y_k}$.
    \item For the pole system of the MN equation $\frac{dx_j}{dt} = -2i\delta \sum_{k \neq j}\frac{1}{x_j - x_k}$ with $\delta>0$, set $y_j = e^{-i\frac{3\pi}{4}}x_j$; then $\frac{dy_j}{dt} = 2\delta \sum_{k \neq j}\frac{1}{y_j - y_k}$.
    \item For $\frac{dx_j}{dt} = -2i\delta \sum_{k \neq j}\frac{1}{x_j - x_k}$ with $\delta<0$, set $y_j = e^{-i\frac{\pi}{4}}x_j$; then $\frac{dy_j}{dt} = -2\delta \sum_{k \neq j}\frac{1}{y_j - y_k}$.
\end{itemize}
Throughout this qualitative discussion we restrict attention to the non-coalescing case. If pole coalescence occurs, the pole system ceases to provide the appropriate description and the underlying linear equation should be analyzed directly.

\begin{lemma}
\label{lemma_MNSM_conservation}
    The sum of the poles is conserved:
    \begin{equation}
        \sum_{j=1}^N x_j(t) = \sum_{j=1}^N x_j(0).
    \end{equation}
\end{lemma}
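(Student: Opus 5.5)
The plan is to differentiate the sum of the poles along the flow \eqref{eq_MNSM_HE_poles_1st} and show that the derivative vanishes identically by antisymmetry. We work in the non-coalescing regime stipulated above, so $x_j(t)\neq x_k(t)$ for $j\neq k$ on the interval of existence. There each right-hand side is a smooth function of the $x_j$, and the trajectories $x_j(t)$ are differentiable.

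First we compute
\begin{equation*}
\frac{d}{dt}\sum_{j=1}^N x_j(t) = -2\delta \sum_{j=1}^N\sum_{k\neq j}\frac{1}{x_j-x_k}.
\end{equation*}
Next we pair the term indexed by $(j,k)$ with the one indexed by $(k,j)$. Since $\frac{1}{x_j-x_k}+\frac{1}{x_k-x_j}=0$, the double sum over ordered pairs $j\neq k$ is a sum of canceling pairs and equals zero. Finally, a function with vanishing derivative on an interval is constant, so $\sum_j x_j(t)=\sum_j x_j(0)$.

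As a cross-check independent of the ODE, one can also read this off from \eqref{eq_gensol_SME}. The $x^{N-1}$ coefficient of $f_N(x,t)$ is $c_{N-1}/(N-1)!$, because the $n\ge 1$ terms only lower the degree by at least two. Vieta's formula then gives $\sum_j x_j(t)=-Nc_{N-1}$ for all $t$. This version also covers times of coalescence, since it never uses the ODE.

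There is no real obstacle here. The only point needing care is that the derivation is valid only while the poles remain distinct, and this is exactly the standing non-coalescing assumption.
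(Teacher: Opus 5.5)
Your proof is correct and matches the paper's: summing the pole equations over $j$ gives zero because of the antisymmetry of $1/(x_j-x_k)$, so the sum of the poles is constant. Your Vieta cross-check, using the $x^{N-1}$ coefficient $c_{N-1}/(N-1)!$ of \eqref{eq_gensol_SME}, is also valid, and it has the small advantage of not needing the non-coalescing assumption.
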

\begin{proof}
    Summing \eqref{eq_MNSM_HE_poles_1st} over $j=1,\dots,N$ gives $\sum_{j=1}^N \frac{d x_j(t)}{dt} = 0$, which proves the assertion.
\end{proof}

\begin{lemma}
\label{lemma_MNSM_2}
    At any $t_0>0$, let $x_l$ be the pole with the smallest imaginary part and $x_s$ the pole with the largest imaginary part. Then
    \begin{align}
        \left.\frac{d \mathrm{Im}\, x_l}{dt}\right|_{t=t_0} &\geq 0,\\
        \left.\frac{d \mathrm{Im}\, x_s}{dt}\right|_{t=t_0} &\leq 0.
    \end{align}
    Thus the lowest pole cannot move downward, while the highest pole cannot move upward.
\end{lemma}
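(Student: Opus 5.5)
The plan is to compute $\frac{d}{dt}\mathrm{Im}\,x_j$ directly from the pole system \eqref{eq_MNSM_HE_poles_1st} and read off its sign term by term. Write $x_j - x_k = a_{jk} + i b_{jk}$ with $a_{jk},b_{jk}$ real. Then
\begin{equation*}
\frac{1}{x_j-x_k} = \frac{a_{jk} - i b_{jk}}{a_{jk}^2+b_{jk}^2},
\end{equation*}
so that
\begin{equation*}
\frac{d\,\mathrm{Im}\,x_j}{dt} = -2\delta \sum_{k\neq j} \frac{-\,b_{jk}}{|x_j-x_k|^2} = 2\delta \sum_{k\neq j}\frac{\mathrm{Im}\,x_j - \mathrm{Im}\,x_k}{|x_j-x_k|^2}.
\end{equation*}
Throughout, $\delta<0$, and in the non-coalescing case every denominator is positive, so each summand is well defined.

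First take $j=l$, the pole with the smallest imaginary part at $t_0$. Every numerator satisfies $\mathrm{Im}\,x_l-\mathrm{Im}\,x_k\le 0$, so the sum is $\le 0$. Multiplying by $2\delta<0$ gives $\frac{d\,\mathrm{Im}\,x_l}{dt}\ge 0$.

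Symmetrically, take $j=s$, the pole with the largest imaginary part. Each numerator $\mathrm{Im}\,x_s-\mathrm{Im}\,x_k\ge0$, so the sum is $\ge 0$, and multiplying by $2\delta<0$ yields $\frac{d\,\mathrm{Im}\,x_s}{dt}\le 0$.

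There is no real obstacle here; the only care needed is bookkeeping.

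\emph{Sign of the imaginary part.} One must check that $\mathrm{Im}\,\frac{1}{z}=-\frac{\mathrm{Im}\,z}{|z|^2}$, so the minus sign combines correctly with $-2\delta$.

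\emph{Ties.} If several poles share the extremal imaginary part, the conclusion still holds. For the extremal pole $x_j$, the corresponding terms have zero numerator and contribute nothing, while every other term retains the required sign.

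\emph{Well-definedness.} The non-coalescing assumption ensures every $|x_j-x_k|^2>0$, so the derivative exists at $t_0$.

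The inequalities concern the derivative at a single instant. Therefore no monotonicity argument over time is required, although one may remark that the identity of $x_l$ and $x_s$ may change with $t$.
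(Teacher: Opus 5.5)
Your proposal is correct and is essentially the paper's own proof. Both take the imaginary part of the pole equation, with the sign fixed by $-2\delta>0$, to get $\frac{d\,\mathrm{Im}\,x_j}{dt}=-2\delta\sum_{k\neq j}\frac{\mathrm{Im}\,x_k-\mathrm{Im}\,x_j}{|x_j-x_k|^2}$, and then read off the sign of each term at the lowest and highest poles.
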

\begin{proof}
    From \eqref{eq_MNSM_HE_poles_1st},
    \begin{align}
        \left.\frac{d \mathrm{Im}\, x_l}{dt}\right|_{t = t_0} &=\left. -2\delta \sum_{j\neq l} \frac{\mathrm{Im}\, x_j - \mathrm{Im}\, x_l}{|x_l - x_j|^2}\right|_{t = t_0}\geq 0,\\
        \left.\frac{d \mathrm{Im}\, x_s}{dt}\right|_{t = t_0} &=\left. -2\delta \sum_{j\neq s} \frac{\mathrm{Im}\, x_j - \mathrm{Im}\, x_s}{|x_s - x_j|^2}\right|_{t = t_0}\leq 0.
    \end{align}
    This proves the claim.
\end{proof}

\begin{lemma}
\label{lemma_MNSM_3}
    At time $t$, label the poles by increasing real part as $x_1(t), x_2(t), \dots, x_N(t)$ (equal real parts are allowed; the assignment of indices to specific poles may vary with time). Let $d_{j} = |\mathrm{Re}\, x_{j+1}(0) - \mathrm{Re}\, x_j(0)|$. Then
    \begin{equation}
        |\mathrm{Re}\,(x_{j+1}(t) - x_j(t))| \leq \sqrt{-8\delta(N-1)t + d_j^2}.
    \end{equation}
\end{lemma}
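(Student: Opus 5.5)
We aim to show that $h(t)=g(t)^2$, with $g(t)=\mathrm{Re}\,(x_{j+1}(t)-x_j(t))\ge 0$, satisfies a differential inequality $h'(t)\le -8\delta(N-1)$ wherever $g(t)>0$. Integrating from $0$ (where $h(0)=d_j^2$) then gives the claim. The starting point is the real part of \eqref{eq_MNSM_HE_poles_1st}:
\begin{equation*}
\frac{d\,\mathrm{Re}\,x_m}{dt}=-2\delta\sum_{k\neq m}\frac{\mathrm{Re}\,(x_m-x_k)}{|x_m-x_k|^2}.
\end{equation*}
Here $-2\delta>0$. The elementary bound we will use repeatedly is that, for $a\in\mathbb{C}$ with $\mathrm{Re}\,a>0$,
\begin{equation*}
\frac{\mathrm{Re}\,a}{|a|^2}\le \frac{1}{\mathrm{Re}\,a},
\end{equation*}
and that $\mathrm{Re}\,a/|a|^2\le 0$ whenever $\mathrm{Re}\,a\le 0$.

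\textbf{Step 1: bound on $g'$.} At a time where the labels are locally fixed and $g>0$, we compute
\begin{equation*}
g'=-2\delta\Big[\sum_{k\neq j+1}\frac{\mathrm{Re}\,(x_{j+1}-x_k)}{|x_{j+1}-x_k|^2}+\sum_{k\neq j}\frac{\mathrm{Re}\,(x_k-x_j)}{|x_k-x_j|^2}\Big]
\end{equation*}
and split the terms by $k$, using the ordering of real parts.
\begin{itemize}
\item The pair $k=j$ (first sum) and $k=j+1$ (second sum) gives two terms, each at most $1/g$.
\item For $k<j$: the second-sum term has $\mathrm{Re}\,(x_k-x_j)\le 0$, so it is $\le 0$. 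The first-sum term has $\mathrm{Re}\,(x_{j+1}-x_k)\ge g$, so it is at most $1/\mathrm{Re}(x_{j+1}-x_k)\le 1/g$.
\item For $k>j+1$: symmetrically, the first-sum term is $\le 0$ and the second-sum term is $\le 1/g$.
\end{itemize}
Hence each of the $N-2$ remaining indices contributes at most $1/g$, and
\begin{equation*}
g'\le -2\delta\,\frac{N}{g}\le -2\delta\,\frac{2(N-1)}{g},
\end{equation*}
where the last inequality uses $N\ge 2$. Therefore $h'=2gg'\le -8\delta(N-1)$.

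\textbf{Step 2: labels, ties, and $g=0$.} This is the main obstacle, because the labeling by real part can change in time. The quantity $\mathrm{Re}\,x_{(j)}(t)$ (the $j$-th order statistic of the real parts) is continuous in $t$ and piecewise $C^1$, with non-smooth points only where real parts coincide, so $h$ is continuous.

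I would work with the upper Dini derivative $D^+h$. At a tie, $x_{j+1}$ is the branch with the largest right-derivative among the tied poles, and $x_j$ is the branch with the smallest. The Step 1 estimate applies to whichever actual poles realize these branches: the only facts it used are $\mathrm{Re}\,x_k\le \mathrm{Re}\,x_j$ for $k$ on one side and $\mathrm{Re}\,x_k\ge\mathrm{Re}\,x_{j+1}$ for $k$ on the other, and these still hold at time $t$. This gives $D^+h\le -8\delta(N-1)$ whenever $g>0$.

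When $g(t)=0$, we have $h=0$, which is already below any nonnegative comparison function.

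\textbf{Step 3: comparison and conclusion.} A standard comparison argument for continuous functions with bounded upper Dini derivative now gives $h(t)\le h(0)-8\delta(N-1)t$. Concretely, on any interval $(t_1,t)$ where $g>0$, take $t_1$ either $0$ or the last zero of $g$; in both cases $h(t_1)\le d_j^2$, and the bound follows on that interval. Taking square roots yields the stated estimate.
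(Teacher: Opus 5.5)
Your proof is correct and follows the paper's route: bound $\frac{d}{dt}\mathrm{Re}\,(x_{j+1}-x_j)$ by a constant over $g$ using the ordering of real parts and $\mathrm{Re}\,a/|a|^2\le 1/\mathrm{Re}\,a$, then integrate $g^2$. The paper bounds each of the two velocities separately by $-2\delta(N-1)/g$, whereas you bound the difference directly by $-2\delta N/g$ and then relax it; your Dini-derivative treatment of relabeling and of zeros of $g$ is rigor the paper omits. One small slip in Step 2: when $g>0$, the right derivative of the $(j+1)$-th order statistic is the \emph{smallest} derivative in its tie group, and that of the $j$-th is the largest, not the reverse. This does not affect the conclusion, because your Step 1 bound holds for any tied representatives $a,b$, and $D^+g\le \frac{d}{dt}\mathrm{Re}\,(x_a-x_b)$ for every such pair.
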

\begin{proof}
    From the equations for $x_j$ and $x_{j+1}$,
    \begin{align}
        \frac{d \mathrm{Re}\, x_{j+1}(t)}{dt} &\leq -2\delta(N-1)\frac{1}{\mathrm{Re}\,(x_{j+1}(t) - x_j(t))}, \\
        \frac{d \mathrm{Re}\, x_j(t)}{dt} &\geq 2\delta(N-1)\frac{1}{\mathrm{Re}\,(x_{j+1}(t) - x_j(t))}.
    \end{align}
    Therefore,
    \begin{equation}
        \frac{d(\mathrm{Re}\, x_{j+1}(t) - \mathrm{Re}\, x_j(t))}{dt} \leq -4\delta(N-1)\frac{1}{\mathrm{Re}\,(x_{j+1}(t) - x_j(t))}.
    \end{equation}
    Integrating with respect to $t$ gives the stated bound.
\end{proof}

\begin{theorem}
\label{theorem_MNSM_toL}
    Let $L = \frac{1}{N}\sum_{j=1}^N \mathrm{Im}\, x_j(0)$. Then
    \begin{equation}
        \lim_{t\to +\infty}\mathrm{Im}\, x_j(t) = L, \quad j=1,\dots,N.
    \end{equation}
\end{theorem}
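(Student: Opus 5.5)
The plan is a Lyapunov (variance) argument for the imaginary parts, with the decay rate controlled by the $\sqrt{t}$ growth bound of Lemma~\ref{lemma_MNSM_3}. Write $y_j=\mathrm{Im}\,x_j$ and $\kappa=-2\delta>0$. Taking imaginary parts in \eqref{eq_MNSM_HE_poles_1st}, and using $\mathrm{Im}\,\frac{1}{x_j-x_k}=-\frac{y_j-y_k}{|x_j-x_k|^2}$, gives
\begin{equation*}
\frac{dy_j}{dt}=\kappa\sum_{k\neq j}\frac{y_k-y_j}{|x_j-x_k|^2}.
\end{equation*}
By Lemma~\ref{lemma_MNSM_conservation}, taking imaginary parts, $\frac1N\sum_j y_j(t)=L$ for all $t$. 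It therefore suffices to show that
\begin{equation*}
V(t)=\sum_{j=1}^N\bigl(y_j(t)-L\bigr)^2
\end{equation*}
tends to $0$.

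\textbf{Step 1 (dissipation identity).} Differentiate $V$. Since $\sum_j \dot y_j=0$, the constant $L$ drops out. Symmetrizing the double sum gives
\begin{equation*}
\frac{dV}{dt}=-2\kappa\sum_{j<k}\frac{(y_j-y_k)^2}{|x_j-x_k|^2}\le 0.
\end{equation*}

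\textbf{Step 2 (upper bound on pole distances).} By Lemma~\ref{lemma_MNSM_2}, the imaginary width $\max_j y_j-\min_j y_j$ is nonincreasing. Hence every $|y_j-y_k|\le D_0$, where $D_0$ is the initial imaginary width. For the real parts, order the poles by real part. The total real width is at most the sum of the $N-1$ consecutive gaps, so Lemma~\ref{lemma_MNSM_3} gives
\begin{equation*}
|\mathrm{Re}(x_j-x_k)|\le \sum_{i=1}^{N-1}\sqrt{-8\delta(N-1)t+d_i^2}.
\end{equation*}
Squaring and combining the two bounds yields constants $A,B>0$, depending only on $N$, $\delta$ and the initial data, such that
\begin{equation*}
|x_j-x_k|^2\le At+B \quad\text{for all } j,k \text{ and } t\ge0.
\end{equation*}
For instance, $A=-8\delta(N-1)^3$ works, with a suitable $B$.

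\textbf{Step 3 (Grönwall).} Insert the bound of Step 2 into Step 1, and use the identity
\begin{equation*}
\sum_{j<k}(y_j-y_k)^2=N\sum_j (y_j-L)^2=NV,
\end{equation*}
which holds because $L$ is the mean of the $y_j$. This gives the differential inequality
\begin{equation*}
\frac{dV}{dt}\le -\frac{2\kappa N}{At+B}\,V.
\end{equation*}
Integrating yields
\begin{equation*}
V(t)\le V(0)\Bigl(\frac{At+B}{B}\Bigr)^{-2\kappa N/A}\longrightarrow 0.
\end{equation*}
Consequently $|y_j(t)-L|\le\sqrt{V(t)}\to0$ for every $j$, which is the assertion, and the convergence is even at an algebraic rate.

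\textbf{Expected obstacle.} The delicate point is Step 2: applying Lemma~\ref{lemma_MNSM_3} when the labelling by real part changes over time. The resulting width bound is used only pointwise in $t$, and it holds for whatever ordering is current at that time, so this should be harmless. I will also check that the standing non-coalescence assumption guarantees the flow, and hence $V$, is smooth on $[0,\infty)$, so that the Grönwall argument applies.
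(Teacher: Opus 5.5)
Your proof is correct, and it takes a genuinely different route from the paper's. The paper works with the imaginary diameter $y(t)=\mathrm{Im}\,x_s-\mathrm{Im}\,x_l$. It differentiates $y$ only at its points of differentiability and discards every interaction term except the one coupling the two extreme poles; this is allowed because each discarded term has the favourable sign. It then bounds the surviving denominator by $y(0)^2$ plus $(N-1)^2$ times the square of the largest consecutive real gap from Lemma~\ref{lemma_MNSM_3}, and integrates $\frac{d}{dt}\ln y$.

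Your variance $V$ is as smooth as the flow, so you need no caveats about the extreme poles changing identity or tying. Your dissipation is an exact identity over all pairs, and $\sum_{j<k}(y_j-y_k)^2=NV$ closes the inequality without throwing anything away. Both routes give algebraic decay. The paper gets $y(t)=O\bigl(t^{-1/(2(N-1)^3)}\bigr)$ and you get $|y_j-L|=O\bigl(t^{-N/(4(N-1)^3)}\bigr)$; these agree for $N=2$, and yours is better for $N\ge 3$.

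Your route also makes Lemma~\ref{lemma_MNSM_2} unnecessary, since $dV/dt\le 0$ already gives $|y_j-y_k|^2\le 2V(0)$. In exchange, the paper's maximum-principle argument tracks the strip $\min_j \mathrm{Im}\,x_j\le \mathrm{Im}\,z\le \max_j \mathrm{Im}\,x_j$. By Lemma~\ref{lemma_MNSM_2} this strip shrinks from both sides, which is exactly what keeps the poles in the upper half-plane; $V$ alone does not give that.

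You can also avoid the obstacle you anticipated, namely Lemma~\ref{lemma_MNSM_3} and its relabelling subtleties. Symmetrizing gives the exact identity $\frac{d}{dt}\sum_j x_j^2=-2\delta N(N-1)$. Taking real parts, together with Lemma~\ref{lemma_MNSM_conservation} and $V(t)\le V(0)$, yields $\sum_j|x_j-\bar{x}_0|^2\le C-2\delta N(N-1)t$ with $C=\sum_j|x_j(0)-\bar{x}_0|^2$. Hence $|x_j-x_k|^2\le 2C-4\delta N(N-1)t$, and your Gr\"onwall step then gives $V(t)=O\bigl(t^{-1/(N-1)}\bigr)$.
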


\begin{proof}
    Define 
    \begin{equation}
        y(t) = \max\{|\mathrm{Im}\, x_j(t) - \mathrm{Im}\, x_k(t)| : j,k=1,\dots,N\} = \mathrm{Im}\, x_s(t) - \mathrm{Im}\, x_l(t),
    \end{equation}
    and 
    \begin{equation}
    d = \max\{d_j : j=1,\dots,N-1\}.
    \end{equation}

    At every differentiability point of $y(t)$, we have
    \begin{equation}
    \begin{aligned}
        \frac{d y(t)}{dt} = &\frac{d\,\mathrm{Im}\, x_s(t)}{dt} - \frac{d\,\mathrm{Im}\, x_l(t)}{dt}\\
        =&\sum_{j\neq s}^N 2\delta \frac{\mathrm{Im}\, x_s(t) - \mathrm{Im}\, x_j(t)}{|x_s(t) - x_j(t)|^2} - \sum_{j\neq l}^N 2\delta \frac{\mathrm{Im}\, x_l(t) - \mathrm{Im}\, x_j(t)}{|x_l(t) - x_j(t)|^2}\\
        \leq& 2\delta \frac{\mathrm{Im}\, x_s(t) - \mathrm{Im}\, x_l(t)}{|x_s(t) - x_l(t)|^2} - 2\delta \frac{\mathrm{Im}\, x_l(t) - \mathrm{Im}\, x_s(t)}{|x_l(t) - x_s(t)|^2}\\
        =& 4\delta \frac{\mathrm{Im}\, x_s(t) - \mathrm{Im}\, x_l(t)}{|x_s(t) - x_l(t)|^2}\\
        =& 4\delta \frac{y(t)}{y(t)^2 + (\mathrm{Re}\, (x_l(t) - x_s(t)))^2}
    \end{aligned}
    \end{equation}
    By Lemma \ref{lemma_MNSM_2}, $y(t)$ is non-increasing, so $y(t) \leq y(0)$ for all $t$. 
    By the labeling convention in Lemma \ref{lemma_MNSM_3},
    the poles are ordered by increasing real part. Hence
    $\bigl|\mathrm{Re}\, (x_l(t) - x_s(t))\bigr| < \bigl((N-1)\max\{ \mathrm{Re}\,(x_{j+1}(t) - x_j(t))\}\bigr)$.
    Consequently,
    \begin{equation}
        \frac{y(t)}{y(t)^2 + (\mathrm{Re}\, (x_l(t) - x_s(t)))^2} \geq \frac{y(t)}{y(0)^2 + \bigl((N-1)\max\{ \mathrm{Re}\,(x_{j+1}(t) - x_j(t))\}\bigr)^2}.
    \end{equation}
    Since $\delta < 0$, applying Lemma \ref{lemma_MNSM_3} again gives
    \begin{equation}
    \begin{aligned}
        \frac{d y(t)}{dt} &\leq \frac{4\delta\, y(t)}{{y(0)}^2 + \bigl((N-1)\max\{ \mathrm{Re}\,(x_{j+1}(t) - x_j(t))\}\bigr)^2} \\
        &\leq \frac{4\delta\, y(t)}{{y(0)}^2 + (N-1)^2\bigl(d^2 - 8\delta(N-1)t\bigr)}.
    \end{aligned}
    \end{equation}
    Since $y(t)\ge 0$, 
    \begin{equation}
    \frac{d}{dt}\ln y(t)
    \leq
    \frac{4\delta}
    {y(0)^2+
    (N-1)^2
    \bigl(d^2-8\delta(N-1)t\bigr)}.
    \end{equation}
    The integral of the right-hand side diverges to $-\infty$ as $t\to+\infty$. Therefore
    \begin{equation}
    \lim_{t\to+\infty}y(t)=0.
    \end{equation}
    Finally, by Lemma \ref{lemma_MNSM_conservation},
    \[
        \frac{1}{N}
        \sum_{j=1}^N
        \mathrm{Im}\, x_j(t)
        =L
    \]
    for all $t\geq 0$. Since the difference between any two imaginary parts tends to zero, it follows that
    \begin{equation}
        \lim_{t\to+\infty}
        \mathrm{Im}\, x_j(t)=L,
        \qquad j=1,\dots,N.
    \end{equation}
    
\end{proof}

\begin{theorem}
\label{theorem_MNSM_toinfinity}
    For $x_1$ and $x_N$ (defined as in Lemma \ref{lemma_MNSM_3}),
    \begin{align}
        \lim_{t\to +\infty}\mathrm{Re}\, x_1(t) &= -\infty,\\
        \lim_{t\to +\infty}\mathrm{Re}\, x_N(t) &= +\infty.
    \end{align}
\end{theorem}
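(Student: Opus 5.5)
The plan is to use a second-moment identity for the pole system \eqref{eq_MNSM_HE_poles_1st}, together with the conservation law of Lemma \ref{lemma_MNSM_conservation} and the confinement of imaginary parts from Lemma \ref{lemma_MNSM_2}. Define
\[
S(t)=\sum_{j=1}^N x_j(t)^2 .
\]
Differentiating and using \eqref{eq_MNSM_HE_poles_1st}, then symmetrizing over pairs $j\neq k$, gives
\[
\frac{dS}{dt}=-4\delta\sum_{j}\sum_{k\neq j}\frac{x_j}{x_j-x_k}=-4\delta\sum_{j<k}\frac{x_j-x_k}{x_j-x_k}=-2\delta N(N-1).
\]
Hence $S(t)=S(0)-2\delta N(N-1)t$. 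In the non-coalescing case this is an exact identity for all $t\ge 0$.

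Next take real parts. Writing $a_j=\mathrm{Re}\,x_j$ and $b_j=\mathrm{Im}\,x_j$, we get
\[
\sum_j a_j^2=\sum_j b_j^2+\mathrm{Re}\,S(0)-2\delta N(N-1)t .
\]
By Lemma \ref{lemma_MNSM_2}, every $b_j(t)$ lies between $\min_k \mathrm{Im}\,x_k(0)$ and $\max_k\mathrm{Im}\,x_k(0)$. Alternatively, Theorem \ref{theorem_MNSM_toL} shows they converge. Either way $\sum_j b_j^2$ is bounded, and since $\delta<0$ we obtain $\sum_j a_j^2\to+\infty$. By Lemma \ref{lemma_MNSM_conservation}, the mean $m=\frac1N\sum_j a_j(t)=\mathrm{Re}\,\bar x_0$ is constant. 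Therefore
\[
\sum_j (a_j-m)^2=\sum_j a_j^2-Nm^2\to+\infty .
\]

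Now pass from the spread to the individual extremes, using the labeling of Lemma \ref{lemma_MNSM_3}. Set $P(t)=a_N(t)-m\ge 0$ and $Q(t)=m-a_1(t)\ge 0$. Since every $|a_j-m|\le \max(P,Q)$, we have $\sum_j(a_j-m)^2\le N\max(P,Q)^2$, so $\max(P,Q)\to\infty$. To get both separately, use $\sum_j(a_j-m)=0$: the sum of the positive deviations equals the sum of the negative deviations. The positive sum lies between $P$ and $(N-1)P$, and the negative sum lies between $Q$ and $(N-1)Q$. Hence $Q\le (N-1)P$ and $P\le (N-1)Q$, and therefore both $P\to\infty$ and $Q\to\infty$. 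This gives $\mathrm{Re}\,x_1\to-\infty$ and $\mathrm{Re}\,x_N\to+\infty$.

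The main point to check carefully is the algebraic identity $dS/dt=-2\delta N(N-1)$, in particular the pairing step. The labels $x_1,x_N$ may switch between poles over time, but the argument only uses $\min_j a_j$ and $\max_j a_j$, which are well defined at each $t$, so this causes no difficulty. Boundedness of the imaginary parts is immediate from Lemma \ref{lemma_MNSM_2}, so I expect no real obstacle beyond these bookkeeping points.
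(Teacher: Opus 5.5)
Your argument is correct, and it takes a genuinely different route from the paper.

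The paper works with the real spread $x(t)=\mathrm{Re}\,x_N(t)-\mathrm{Re}\,x_1(t)$. It drops the nonnegative terms in the pole equations and uses the bound $y(t)\le y(0)$ on the imaginary spread, which comes from Lemma~\ref{lemma_MNSM_2}. This gives the differential inequality $\frac{dx}{dt}\ge -2\delta\,\frac{x}{y(0)^2+x^2}\ge -\frac{2\delta}{c+x}$ with $c=y(0)^2/x(0)$; the second step silently uses that $x$ is non-decreasing. The paper integrates this to $x(t)\ge\sqrt{-4\delta t+(c+x(0))^2}-c$. It then handles the split between $x_1$ and $x_N$ in one sentence citing Lemma~\ref{lemma_MNSM_conservation}.

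You replace this comparison argument by the exact identity $\frac{d}{dt}\sum_j x_j^2=-2\delta N(N-1)$. This is the pole-level form of the linear evolution of the $x^{N-2}$ coefficient of $f$ under $f_t=\delta f_{xx}$. It buys several things:
\begin{itemize}
\item You need nothing about the imaginary parts beyond $\sum_j(\mathrm{Im}\,x_j)^2\ge 0$. So your appeal to Lemma~\ref{lemma_MNSM_2} is not actually needed for the lower bound.
\item You never divide by the initial real spread.
\item Once you add boundedness of the imaginary parts, you get the sharp statement $\sum_j(\mathrm{Re}\,x_j-m)^2=-2\delta N(N-1)t+O(1)$. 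Its constant matches $\sum_j\alpha_j^2=N(N-1)$ in Theorem~\ref{proposition_MNSM_zeros}.
\item Your inequalities $Q\le(N-1)P$ and $P\le(N-1)Q$ supply the balancing step the paper leaves implicit, and they give genuine limits without any monotonicity.
\end{itemize}

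The paper's approach, in return, uses only the sign structure of the interaction, not the exact cancellation $\frac{x_j}{x_j-x_k}+\frac{x_k}{x_k-x_j}=1$. It also comes with monotonicity of $\mathrm{Re}\,x_N$ and $\mathrm{Re}\,x_1$. So it is the more robust argument if the vector field is perturbed in a way that destroys your moment identity.
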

\begin{proof}
    Let $y(0) = \max\{|\mathrm{Im}\, x_j(0) - \mathrm{Im}\, x_k(0)|\}$ and $c = \frac{{y(0)}^2}{|\mathrm{Re}\, x_N(0) - \mathrm{Re}\, x_1(0)|}$.
    Set $x(t) = \mathrm{Re}\, x_N(t) - \mathrm{Re}\, x_1(t)$. Then
    \begin{equation}
        \frac{d x(t)}{dt} \geq -2\delta\frac{x(t)}{{y(0)}^2 + {x(t)}^2}
                       \geq -2\delta\frac{1}{c + x(t)},
    \end{equation}
    which yields $x(t) \geq \sqrt{-4\delta t + (c + x(0))^2} - c$.
    If either $x_1(t)$ or $x_N(t)$ remained bounded, Lemma \ref{lemma_MNSM_conservation} would be contradicted. Hence the limits hold.
\end{proof}

For the SM equation \eqref{eq_MNSM_SME} with $\delta<0$, Theorem \ref{theorem_MNSM_toL} implies that as $t\to +\infty$ all poles converge to a horizontal line, while Theorem \ref{theorem_MNSM_toinfinity} shows that the leftmost and rightmost poles tend to $-\infty$ and $+\infty$, respectively. Rotating back gives the pole behavior in the remaining three cases:
\begin{itemize}
    \item For the SM equation with $\delta>0$, the poles approach the vertical line 
    \[
    x = \frac{1}{N}\sum_{j=1}^N \mathrm{Re}\, x_j(0).
    \]
    \item For the MN equation with $\delta>0$, the poles approach the line 
    \[
    y = -x + \frac{1}{N}\sum_{j=1}^N (\mathrm{Re}\, x_j(0) + \mathrm{Im}\, x_j(0)).
    \]
    \item For the MN equation with $\delta<0$, the poles approach the line 
    \[
    y = x + \frac{1}{N}\sum_{j=1}^N (\mathrm{Im}\, x_j(0) - \mathrm{Re}\, x_j(0)).
    \]
\end{itemize}
In all cases, the poles on the two sides of the limiting line move to infinity along that line; hence the pole set is unbounded.

These pole dynamics are reflected directly in the corresponding algebraic soliton solutions. For the SM equation \eqref{eq_MNSM_SME} ($\delta<0$), the algebraic soliton solution $u$ constructed from the polynomial $f_N$ evolves as $t\to +\infty$ into $N$ waves of equal amplitude, moving symmetrically about the origin.

\section{Review of methods for perturbed solitons}
Before turning to particular perturbations of the SM equation, we recall two standard perturbative approaches for soliton equations.

Keener and McLaughlin \cite{Keener_1977} developed a Green-function method for soliton perturbations. For a nonlinear equation
\begin{equation}
    \partial_t u + K(u) = \epsilon f(u),\quad 0\le \epsilon \le 1,
    \label{eq_pertu_general_u}
\end{equation}
where $K$ is a nonlinear operator. One assumes
\begin{equation}
    u \simeq u_0 + \epsilon u_1 + \cdots,
\end{equation}
with $u_0$ satisfying
\begin{equation}
    \partial_t u_0 + K(u_0) = 0.
    \label{eq_r0_original}
\end{equation}
Then $u_1$ satisfies
\begin{equation}
    (L(u_0))u_1 \equiv \partial_t u_1 + (\delta K(u_0))u_1 = f(u_0),\quad u_1|_{t=0}=0,
\end{equation}
where $(\delta K(u_0))u_1 = \lim_{\epsilon \to 0}\frac{K(u_0 + \epsilon u_1) - K(u_0)}{\epsilon}$. A Green function can then be used to invert $L(u_0)$. Define the linear operator $\hat{G}(t,t')$ from a Hilbert space $\mathcal{H}$ to itself by
\begin{equation}
\begin{aligned}
    &(L(u_0))\hat{G} \equiv \partial_t \hat{G} + (\delta K(u_0))\hat{G} = 0,\quad 0<t'<t,\\
    &\lim_{t \to t'}\hat{G}(t,t') = I.
\end{aligned}
\end{equation}
Let $G(x,t|x',t')$ be the kernel of $\hat{G}(t,t')$. Then $u_1$ can be written as
\begin{equation}
    u_1(x, t) = \int_0^t \hat{G}(t, t') f(u_0(\cdot, t'))\,dt'
              = \int_0^t \bigl(G(x,t|\cdot,t'), f(u_0(\cdot, t'))\bigr)\, dt',
\end{equation}
where $(\cdot,\cdot)$ denotes the $L^2$ inner product. To keep $u_1$ bounded over time scales $O(1/\epsilon)$, i.e., for fixed $\tau$,
\begin{equation}
    \lim_{\epsilon \to 0} \epsilon u_1(\cdot, \tau/\epsilon) = 0,
    \label{eq_r1_require}
\end{equation}
one usually allows the parameters of $u_0$ to depend on $\epsilon t$. Write $u_0(x,t,\epsilon t)$; then $u_1$ satisfies
\begin{equation}
\begin{aligned}
    &(L(u_0))u_1 = F(u_0), \\
    &F(u_0) = f(u_0) - \frac{1}{\epsilon}\bigl(\partial_t u_0 + K(u_0)\bigr),
\end{aligned}
\end{equation}
and
\begin{equation}
    u_1(t) = \int_0^t \bigl(\hat{G}(t,t')F(u_0(t'))\bigr)(x)\, dt'. \label{eq_r1}
\end{equation}
The problem is therefore reduced to choosing the slow evolution of the parameters so that \eqref{eq_r1_require} holds.

In this approach, an explicit expression for $\hat{G}$ is needed. Setting $\tilde{K}(u_0) = \delta K(u_0)$ and $\tilde{K}'$ by $(f,\tilde{K}g) = -(\tilde{K}' f,g)$, for inverse-scattering-solvable equations satisfying $\tilde{K}'(u_0)=\tilde{K}(u_0)$ \cite{Keener_1977, Keener_1977_Green}, one splits $\hat{G}(t,t') = \hat{G}_d(t,t') + \hat{G}_c(t,t')$, where $\hat{G}_d$ projects onto the discrete subspace spanned by the soliton parameters. For an $N$-soliton solution $u_0$ with $2N$ free parameters $\{p_j\}$, the set
\begin{equation}
    \Bigl\{ \frac{\partial u_0}{\partial p_j},\ j=1,\dots,2N \Bigr\}
    \label{eq_pertu_discrete_space}
\end{equation}
spans this discrete subspace. The kernel of $\hat{G}_d(t,t')$ then has the form
\begin{equation}
    G_d(x, t|x', t') = \sum_{j=1}^{2N} A_j(x,t) \frac{\partial u_0}{\partial p_j}(x', t').
\end{equation}
For large $t'$, the dependence of $u_0$ on $x'$ and $t'$ is through combinations $x'-c_j t'$, so for large $t$, $u_1$ behaves roughly as $\iint \mathcal{F}(x'-c_j t')\,dx'\,dt'$. To avoid secular growth, one requires $\int \mathcal{F}(x'-c_j t')\,dx' =0$, i.e., $F(u_0)$ to be orthogonal to the discrete subspace \eqref{eq_pertu_discrete_space}. This gives
\begin{equation}
    \Bigl( \frac{\partial u_0}{\partial p_k}, f(u_0) - \sum_{j=1}^{2N} \frac{\partial p_j}{\partial \tau} \frac{\partial u_0}{\partial p_j} \Bigr) = 0, \quad k=1,\dots,2N,
\end{equation}
with $\tau = \epsilon t$.

For cases in which $\tilde{K}'(u_0) \neq \tilde{K}(u_0)$, Tanaka \cite{Tanaka_1980, Matsuno_1995_perturbation} introduced a more direct method. Continuing to consider \eqref{eq_pertu_general_u}, assume
\begin{equation}
    u(x,t) = u_0(x,t,t_1) + \epsilon u_1(x,t,t_1) + \cdots, \quad t_1 = \epsilon t,
\end{equation}
with $u_0$ satisfying $\partial_t u_0 + K(u_0)=0$. Define $L(u_0) = \partial_t + \delta K(u_0)$; then
\begin{equation}
    (L(u_0))u_1 = f(u_0) - \frac{\partial u_0}{\partial t_1}.
\end{equation}
Let $L^* = \partial_t + (\delta K(u_0))'$ be the adjoint operator. If a function $g$ satisfies
\begin{equation}
    L^*(g) = 0, \label{eq_pertu_L*g}
\end{equation}
then combining with the equation for $u_1$ gives
\begin{equation}
    \frac{\partial}{\partial t}\Bigl( \int_{-\infty}^\infty g u_1\,dx \Bigr) = \int_{-\infty}^\infty g \cdot \Bigl( f(u_0) - \frac{\partial u_0}{\partial t_1} \Bigr)\,dx.
\end{equation}
When $u_0$ is a soliton solution and $g$ depends on $u_0$, the right-hand side becomes time-independent for large $t$. If it were nonzero, $\int g u_1 dx$ would grow linearly in $t$, which would be inconsistent with a regular perturbation expansion. Hence one imposes
\begin{equation}
    \int_{-\infty}^\infty g \cdot \Bigl( f(u_0) - \frac{\partial u_0}{\partial t_1} \Bigr)\,dx = 0.
\end{equation}
For an $N$-soliton solution, if $2N$ independent solutions $g$ of \eqref{eq_pertu_L*g} can be found, they determine the slow evolution of the $2N$ parameters of $u_0$ with respect to $t_1$.

\section{Perturbations of the Satsuma--Mimura equation}
In the approaches recalled above, the crucial step is to obtain $L^*(u_0)$ and solve
\begin{equation}
    (L^*(u_0))g = 0.
\end{equation}
For the SM equation \eqref{eq_MNSM_SME}, a direct computation gives
\begin{equation}
    L^*(u_0) = \partial_t - \delta\bigl( -\partial_x^2 - (Hu_0)\partial_x + H(u_0\partial_x) \bigr),
\end{equation}
so the equation for $g$ is
\begin{equation}
    g_t = \delta\bigl( -g_{xx} - g_x Hu_0 + H(u_0 g_x) \bigr). \label{eq_pertu_g}
\end{equation}
Solving \eqref{eq_pertu_g} directly is difficult. In the following three subsections, we instead combine the linearized equation with direct numerical computation to examine perturbations of the form $u_x$, $u_{xx}$, and $Hu_{xx}$.

\subsection{The $u_x$ perturbation}
Consider the perturbed equation
\begin{equation}
    u_t = \delta(u_{xx} - (uHu)_x) + \epsilon u_x.
\end{equation}
Using the transformation $u = i\bigl(\ln \frac{f^*}{f}\bigr)_x$, we obtain the linearized equation
\begin{equation}
    f_t = \delta f_{xx} + \epsilon f_x.
\end{equation}
Let $f = \prod_{j=1}^N \bigl(x - x_j(t) - \epsilon x_{1,j}(t)\bigr)$. The pole system becomes
\begin{equation}
    \frac{d x_j}{dt} + \epsilon \frac{d x_{1,j}}{dt} = -2\delta \sum_{k\neq j} \frac{1}{x_j - x_k + \epsilon(x_{1,j} - x_{1,k})} - \epsilon.
\end{equation}
Since the $x_j$ satisfy the unperturbed pole system
\begin{equation}
    \frac{d x_j}{dt} = -2\delta \sum_{k\neq j} \frac{1}{x_j - x_k},
\end{equation}
and $x_{1,j}(0)=0$, we obtain $x_{1,j}(t) = -t$, hence $f = \prod_{j=1}^N \bigl( x - x_j(t) + \epsilon t \bigr)$.

Equivalently, the change of variables $t=t',\ x = x' - \epsilon t'$ transforms $f_t - \epsilon f_x = \delta f_{xx}$ into $f_{t'} = \delta f_{x'x'}$. The solution of the latter equation is $\prod_{j=1}^N (x' - x_j(t'))$, so the solution of the former is $\prod_{j=1}^N (x + \epsilon t - x_j(t))$.

The corresponding algebraic soliton solution is
\begin{equation}
    u(x,t) = \sum_{j=1}^N \frac{2\,\mathrm{Im}\, x_j(t)}{(x - \mathrm{Re}\, x_j(t) + \epsilon t)^2 + (\mathrm{Im}\, x_j(t))^2},
\end{equation}
where $x_j(t)$ are the poles of the unperturbed system.

\subsection{The $u_{xx}$ perturbation}
We next consider
\begin{equation}
    u_t = (\delta + \epsilon)u_{xx} - \delta (uHu)_x. \label{eq_pertu_eq2}
\end{equation}
Set $x = k x'$ with $k = \frac{\delta+\epsilon}{\delta}$; the equation becomes
\begin{equation}
    u_t = \frac{\delta^2}{\delta+\epsilon}\bigl( u_{x'x'} - (uHu)_{x'} \bigr). \label{eq_pertu_eq2_trans}
\end{equation}
By Proposition \ref{proposition_zeros_delta}, the zeros $x'_j(t)$ of \eqref{eq_pertu_eq2_trans} have the asymptotic form
\begin{equation}
    x'_j(t) \sim \sqrt{-2\frac{\delta^2}{\delta+\epsilon}}\,\alpha_j \sqrt{t} + i\,\frac{\sum_{j=1}^N \mathrm{Im}\, x'_j(0)}{N}.
\end{equation}
Therefore the poles of \eqref{eq_pertu_eq2} satisfy
\begin{equation}
\begin{split}
    x_j(t) &\sim \frac{\delta+\epsilon}{\delta}\Bigl( \sqrt{-2\frac{\delta^2}{\delta+\epsilon}}\,\alpha_j \sqrt{t} + i\,\frac{\sum_{j=1}^N \mathrm{Im}\, x'_j(0)}{N} \Bigr)\\
    &\sim \sqrt{\frac{\delta+\epsilon}{\delta}}\,\sqrt{-2\delta}\,\alpha_j \sqrt{t} + i\,\frac{\sum_{j=1}^N \mathrm{Im}\, x_j(0)}{N}\\
    &\sim \Bigl(1 + \frac{\epsilon}{2\delta}\Bigr)\sqrt{-2\delta}\,\alpha_j \sqrt{t} + i\,\frac{\sum_{j=1}^N \mathrm{Im}\, x_j(0)}{N}.
\end{split}
\label{eq_pertu2_xj}
\end{equation}
For large $t$, the algebraic soliton solution behaves like
\begin{equation}
    u(x,t) \sim \sum_{j=1}^N \frac{2\,\mathrm{Im}\, x_j(t)}{\bigl(x - (1+\frac{\epsilon}{2\delta})\mathrm{Re}\, x_j(t)\bigr)^2 + (\mathrm{Im}\, x_j(t))^2}.
\label{eq_pertu2_u}
\end{equation}

A more precise description, avoiding reliance on the asymptotic formula alone, is as follows. Consider the linearized SM equation. Suppose its initial condition is $f(x,0) = \prod_{j=1}^N (x - x_j(0))$. If we modify this initial condition to
\begin{equation}
f(x,0) = \prod_{j=1}^N \left(x - \sqrt{\frac{\delta}{\delta + \epsilon}}\, x_j(0) \right),
\end{equation}
then let $\tilde{x}_j(t)$ ($j=1,\dots,N$) be the poles of the SM equation obtained from this modified initial condition. The poles $x'_j(t)$ of equation \eqref{eq_pertu_eq2_trans} are then given by
\begin{equation}
x'_j(t) = \sqrt{\frac{\delta}{\delta + \epsilon}}\,\tilde{x}_j(t),
\end{equation}
and consequently the poles $x_j(t)$ of the perturbed equation \eqref{eq_pertu_eq2} satisfy
\begin{equation}
x_j(t) = \frac{\delta + \epsilon}{\delta} x'_j(t) = \sqrt{\frac{\delta + \epsilon}{\delta}}\,\tilde{x}_j(t).
\end{equation}
As $t\to +\infty$, these poles reproduce the asymptotic expressions \eqref{eq_pertu2_xj} and \eqref{eq_pertu2_u}.

\subsection{The $Hu_{xx}$ perturbation}
Finally, consider the perturbed equation
\begin{equation}
    u_t = \delta(u_{xx} - (uHu)_x) + \epsilon Hu_{xx}. \label{eq_pertu_upertu}
\end{equation}
This perturbation does not appear to preserve the above linearization.
Instead of solving \eqref{eq_pertu_g}, we use the formal expansion
\begin{equation}
    u(x,t) = u_0(x,t) + \epsilon u_1(x,t) + \epsilon^2 u_2(x,t) + \cdots, \label{eq_pertu_assume}
\end{equation}
and substitute it into \eqref{eq_pertu_upertu}. At first order in $\epsilon$ this gives
\begin{equation}
\begin{aligned}
    {u_1}_t &= \delta\bigl( {u_1}_{xx} - (u_0 H u_1)_x - (u_1 H u_0)_x \bigr) + H{u_0}_{xx},\\
    H{u_1}_t &= \delta\bigl( H{u_1}_{xx} + (u_0 u_1)_x - (Hu_0 Hu_1)_x \bigr) - {u_0}_{xx}.
\end{aligned}
\end{equation}
Since the algebraic soliton solution of the SM equation develops singularities in finite time when $\delta>0$, we restrict the simulation of $u_1$ to $\delta<0$. Even in this case, the term $\delta {u_1}_{xx}$ may cause numerical instability, so the following computation is restricted to short times. We use explicit Euler time stepping and centered finite differences in space:
\begin{equation}
    \begin{aligned}
        \frac{{u_1}^{k+1}_j - {u_1}^k_j}{\Delta t} =& \delta \Bigl( \frac{{u_1}^k_{j+1} - 2 {u_1}^k_j + {u_1}^k_{j-1}}{{\Delta x}^2} - \frac{{u_0}^{k}_{j+1}{Hu_1}^k_{j+1} - {u_0}^k_{j-1}{Hu_1}^k_{j-1}}{2\Delta x} \\
        &- \frac{{u_1}^k_{j+1}{Hu_0}^k_{j+1} - {u_1}^k_{j-1}{Hu_0}^k_{j-1}}{2\Delta x} \Bigr) + \frac{{Hu_0}^k_{j+1} - 2{Hu_0}^k_j + {Hu_0}^k_{j-1}}{{\Delta x}^2},\\[4pt]
        \frac{{Hu_1}^{k+1}_j - {Hu_1}^k_j}{\Delta t} =& \delta \Bigl( \frac{{Hu_1}^k_{j+1} - 2 {Hu_1}^k_j + {Hu_1}^k_{j-1}}{{\Delta x}^2} + \frac{{u_0}^{k}_{j+1}{u_1}^k_{j+1} - {u_0}^k_{j-1}{u_1}^k_{j-1}}{2\Delta x} \\
        &- \frac{{Hu_1}^k_{j+1}{Hu_0}^k_{j+1} - {Hu_1}^k_{j-1}{Hu_0}^k_{j-1}}{2\Delta x} \Bigr) - \frac{{u_0}^k_{j+1} - 2{u_0}^k_j + {u_0}^k_{j-1}}{{\Delta x}^2}.
    \end{aligned}
\end{equation}

We take $\delta = -\frac{1}{2}$, $\Delta t = 10^{-6}$, $\Delta x = 0.2$ and compute up to $t = 0.2$. The profile of $u_1$ is shown in Figure~\ref{fig_pertu}, together with $u_0$ for comparison.

The term $\delta u_{xx}$ with $\delta<0$ causes numerical instability.
Constructing a numerical scheme that solves the SM equation stably over long time intervals is a nontrivial task. A structure-preserving scheme will be discussed in a separate work.

\begin{figure}
    \centering
    \includegraphics[width = 0.55\linewidth]{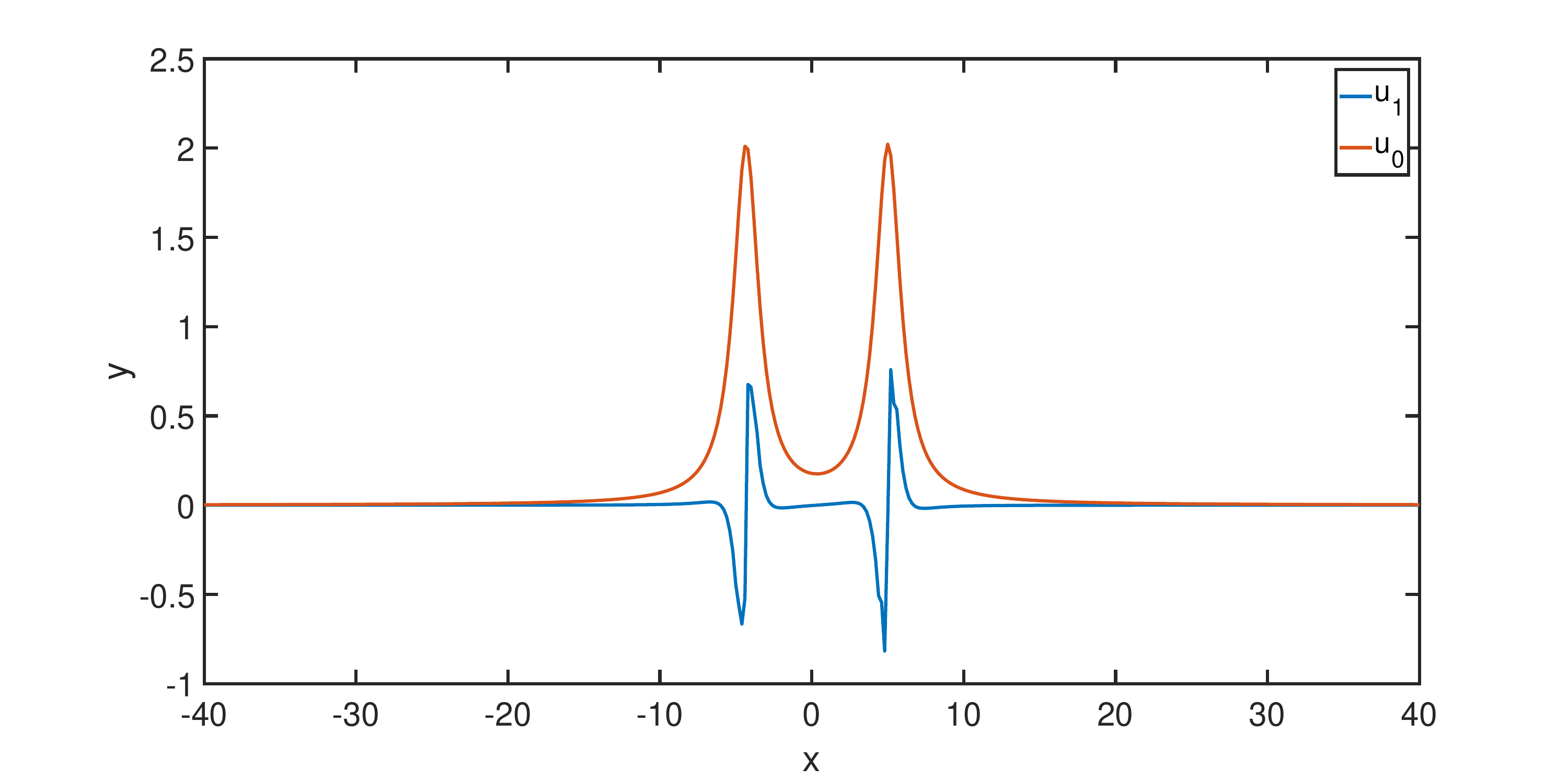}
    \caption{Short-time numerical simulation of $u_0$ and the first-order correction $u_1$ at $t=0.2$ for equation \eqref{eq_pertu_upertu} under the assumption \eqref{eq_pertu_assume}.}
    \label{fig_pertu}
\end{figure}

\section{Conclusion}
We have analyzed the pole dynamics of the Satsuma--Mimura equation through the associated pole system and then used this structure to study three perturbations of the SM equation. For the $u_x$ perturbation, exact linearizability is preserved by a simple shift transformation, which gives an explicit algebraic soliton solution. For the $u_{xx}$ perturbation, a scaling transformation reduces the perturbed equation to the unperturbed SM equation and yields the asymptotic pole locations and soliton profile. For the non-linearizable $Hu_{xx}$ perturbation, we derived the first-order perturbation equations and carried out a short-time simulation using an explicit Euler scheme. These results show how algebraic solitons respond to different perturbative mechanisms and provide a basis for further analytical and numerical work.



\label{lastpage}
\end{document}